\def\row#1#2{{#1}_1,\ldots ,{#1}_{#2}}
\newcounter{theexample} \setcounter{theexample}{1}
\newcounter{themyclaim}
\def\Q{\mathbb{Q}}
\def\L{\mathcal{L}}
\def\row#1#2{{#1}_1,\ldots ,{#1}_{#2}}
\def\2vec#1#2{\left(\begin{array}{c}{#1}\\{#2}\end{array}\right)}
\newtheorem{theorem}{Theorem}
\newtheorem{lemma}{Lemma}
\newtheorem{proposition}{Proposition}
\newtheorem{example}{Example}
\newtheorem{definition}{Definition}
\begin{document}

%\begin{frontmatter}
\title{\bf The single-crossing property on a tree}
\author{A. Clearwater,  C. Puppe and A. Slinko
}
\date{}

\maketitle

\begin{abstract}
\begin{quote}
We generalize the classical single-crossing property to single-crossing property on trees and obtain new ways to construct Condorcet domains which are sets of linear orders which possess the property that every profile composed from those orders have transitive majority relation. We prove that for any tree there exist profiles that are single-crossing on that tree; moreover, that tree is minimal in this respect for at least one such profile. Finally, we provide a polynomial-time algorithm to recognize whether or not a given profile is single-crossing with respect to some tree. We also show that finding winners for Chamberlin-Courant rule is polynomial  for profiles that are single-crossing on trees. 
\end{quote}
\end{abstract}

\section{Introduction}

Condorcet's famous paradox %\cite{Condorcet1785} 
demonstrates that pairwise majority voting may produce
intransitive collective preferences. The question whether, and if so, how, this problem can be overcome by means
of restrictions on the domain of admissible individual preferences has attracted constant interest over the recent
decades, see \cite{Gaertner2001} for a detailed overview. 
\iffalse
The first search for systematic restrictions
culminated in the formulation of a set of abstract necessary and sufficient conditions for transitivity (respectively,
acyclicity) of the majority relation, see \cite{Sen1966} and \cite{SenPattanaik1969}. These conditions take the form
of the exclusion of certain preference patterns over triples of alternatives, see \cite{Duggan2014} for a recent
contribution in this direction. 
While the conditions that characterize transitivity (respectively, acyclicity) of the majority
relation are logically complex, there are a number of simple and intuitive {\em sufficient} conditions. 
\fi
Probably the
best-known of these is single-peakedness \cite{Black1948} which is frequently employed in models of political economy. It stipulates that all alternatives can be arranged along one dimension, for instance according to the political left-right spectrum. 
The concept of single-peakedness itself can be generalized considerably, 
\iffalse
for instance single-peakedness with respect to trees \cite{Demange1982}, in
multi-dimensional situations \cite{BGS1993}, and with respect to an abstract underlying betweenness
relation \cite{NehringPuppe2007a,NehringPuppe2007b}. 
\fi
however, the sufficiency of single-peakedness for
transitivity of the (strict) majority relation is confined to the classical one-dimensional case.\footnote{Single-peakedness
on trees still guarantees the existence of a Condorcet winner, and single-peakedness on a median graph the existence
of a ``local'' Condorcet winner \cite{BandeltBarthelemy1984}.}

A different and frequently useful sufficient condition for transitivity of the majority relation is the following single-crossing
property. Suppose that {\em voters} can be arranged on a one-dimensional linear spectrum. A profile of individual preferences
is said to have the single-crossing property if, for all pairs of alternatives $(a,b)$, the set of voters who prefer $a$ to $b$
forms a convex set in the one-dimensional spectrum.  As shown by %Rothstein 
\cite{Roth91}, in every single-crossing profile there exists the so-called {\it representative voter}, i.e.,~a voter whose (strict) preference
coincides with the (strict) majority relation.\footnote{One can show that such result does not hold for single-peaked
preferences even on a line.}

\cite{Roberts1977} and \cite{GansSmart1996} provide a number of economic applications of single-crossingness. All of them represent situations when preferences of individuals depend on a single parameter. For example, in Roberts' seminal paper voters' preferences on the level of taxation depend solely on their income: the lower the income the higher taxation this individual prefers. However, if the state provides subsidies for families with children, this condition may not be satisfied.  But if we fix the income we will find single-crossing condition relative to the second parameter, i.e., the more children the person has the higher the level of taxation she prefers. So there are compelling economic reasons which prompt us to consider single-crossingness on graphs more general than a line. 

 In contrast to the case of single-peaked preferences, the sufficiency of the
single-crossing property for transitivity of the strict majority relation generalizes to a larger class of graphs. In the present
paper, we prove the transitivity of the strict majority relation and a representative voter theorem for single-crossing profiles on trees.
This result also follows from the analysis of  %Demange 
\cite{Demange2012} based on the notion of {\em intermediate preferences}
\cite{Grandmont1978}. %Assume that voters can be represented as the vertices of a graph, and say that a profile
%satisfies the intermediateness property if, for all pairs of alternatives $(a,b)$ and any two voters $i$ and $j$ who prefer
%$a$ to $b$, all voters that lie on a shortest path between $i$ and $j$ on the graph also prefer $a$ to $b$. Evidently,
%if the graph is a line, or more generally a tree, a profile satisfies intermediateness in this sense if and only if it satisfies the
%single-crossing property. \cite{Demange2012} then proves a representative voter theorem for all median graphs. The
%result presented here follows from this general result since, as is well known, every tree is a median graph. 
%However, our direct proof for single-crossing profiles on a tree is especially simple and illuminating. 

The second part of the paper is devoted to algorithmic aspects. First, we give a constructive proof of the existence of a
single-crossing profile for any tree with $n$ vertices. We prove that $n$ alternatives are always sufficient and that there
exist trees for which single-crossing profiles with less than $n$ alternatives do not exist. We also give a polynomial-time
algorithm that recognizes whether or not a given profile is single-crossing with respect to some tree. Finally, we prove that 
the Chanberlin-Courant multi-winner voting rule on single-crossing profiles on trees has a polynomial time winner-determination 
problem which generalises a similar result of \cite{SYFE2013} for classical single-crossing property. It is interesting to note that for 
the single-peaked property on a tree only the egalitarian version of the Chanberlin-Courant rule remains polynomial. The classical 
utilitarian version of this rule becomes NP-hard \cite{YuCE13}.

%\iffalse
The problem addressed in the present paper is closely related to the search of so-called (maximal) ``Condorcet
domains'' \cite{AbelloJohnson84,Abello91,GR:2008,DKK:2012}; see also the survey on the topic in~\cite{Mon:survey}.
Indeed, any single-crossing profile on a tree provides us with a new type of Condorcet domain (although possibly
not maximal). 
%\fi

\section{Preliminaries}

%\subsection{Linear orders and profiles}

Let $A$ and $V$ be two finite sets of cardinality 
$m$ and $n$ respectively. The elements of $A$ will be called alternatives, the elements of
$V=\{1,2,\ldots, n\}$ voters. We assume that the voters have preferences over the set of alternatives.  By
$\L(A)$ we denote the set of all (strict) linear orders on $A$; they represent the preferences of
agents over $A$. The elements of the Cartesian product
$
\L(A)^n=\L(A)\times\ldots\times \L(A)\ \ \ \mbox{($n$ times)}
$
are called $n$-profiles or simply profiles. They represent the
collection of preferences of an
\mbox{$n$-element} society of voters $V$ over alternatives from $A$. If a linear order 
$R_i\in \L(A)$ represents the preferences of the $i$-th agent, then by
$aR_ib$, where  $a,b\in A$, we denote that this agent prefers $a$ to $b$. We also denote this as $a\succ_i b$.\par

\begin{definition}
Let $R=(\row Rn)$ be a profile. The {\em majority relation} $M(R)$ of $R$ over $A$  is the binary relation on $A$ such that for any 
$a,b\in A$ we have $a\succeq b$ if and only if
$|(\{i\mid aR_ib\}|\ge |(\{i\mid bR_ia \}|$.\par
\end{definition}

When $n$ is odd, the majority relation is a tournament on $A$, i.e., complete and asymmetric binary relation. When $n$ is even, we may have an indifference when both $a\succeq b$ and $b\succeq a$  which  we denote as $a\sim b$. We will also write $a\succ b $ if $a\succeq b$ but not $b\succeq a$ and call it the {\em strict majority relation}.  \par

\begin{definition}
A Condorcet Domain is a set of linear orders $C\subseteq \L(A)$ such that, no matter how many voters in the profile $P$ have each of the linear orders from $C$ as their preference relation,  the strict majority relation of $P$ is  transitive.
\end{definition}

%\subsection{Trees}

A tree is a connected graph $T=(V,E)$ with the set  of vertices $V$ and the set of edges $E$ such that $|E|=|V|-1$. Since there is a unique path between two vertices $u$ and $v$ we can define the distance $d(u,v)$ between them as the number of edges in the unique path between $u$ and $v$.  A subgraph  $T_1=(V_1,E_1)$ is a subtree of $T$, if it is a tree in its own right.
Intersection of any two subtrees is a subtree. If any edge in a tree $T=(V,E)$ is removed, the tree becomes a union of two disconnected subtrees.  See \cite{Diestel2005} for these and further facts about trees.

\section{Condorcet domains related to trees}

\subsection{The single-crossing property on a tree}

This concept generalizes the classical definition of single-crossingness  \cite{Mirr71,GansSmart1996}.

\begin{definition}
Let $P$ be an $n$-voter profile with the set of voters $V$ and the set of alternatives $A$ and $T=(V,E)$ be a tree on the set of voters $V$. We say that the profile $P$ is {\em single-crossing} with respect to $T$ if for every pair of alternatives $a,b\in A$ one of the following holds:
\begin{itemize}

\item We can remove an edge $e=(i,j)$ from $T$ so that for the two resulting subtrees $T_1=(V_1,E_1)$ and $T_2=(V_2,E_2)$ all voters in $V_1$ rate $a$ above $b$ and all voters in $V_2$ rate $b$ above $a$. In this case the edge $e$ will be called {\em $ab$-cut}. 
\item All voters prefer $a$ to $b$ or  all voters prefer $b$ to $a$. In this case we say that the $ab$-cut is {\em virtual}.
\end{itemize}
\end{definition}

\noindent An $ab$-cut partitions the vertices $V=V_{ab}\cup V_{ba}$, where $V_{ab}=\{i\in N\mid a\succ_i b\}$ and $V_{ba}=\{i\in N\mid b\succ_i a\}$;  let $T_{ab}$ and $T_{ba}$ denote the corresponding subtrees of~$T$.

\begin{example}[Classical single-crossing profile]
Suppose the set of alternatives $A$ can be endowed with a linear order $>_A$ such that for any two alternatives  $a,b\in A$ and for any integers $i,j\in N$ with $i<j$
\[
a>_A b\quad \text{and}\quad a\succ_i b \Longrightarrow a\succ_jb.
\]
It is clear that for any pair of alternatives $a,b\in A$ with $a>_Ab$, as $k$ runs from 1 to $n$, the relation $b\succ_k a$ switches to $a\succ_k b$ at most once. Hence for every pair $a,b\in A$ we can find $k_0=k(a,b)\in \{0,1,\ldots,n\}$ such that we have $b\succ_k a$ for $k\le k_0$ and  $a\succ_k b$ for $k>k_0$. This means that such profile is single-crossing with respect to a linear graph
 \begin{center}
\beginpicture
    \setcoordinatesystem units <0.7mm,0.7mm>
    \setplotarea x from -25 to 30, y from 0 to 10
      \setlinear
 \plot 0 0 15 0 30 0 /
    \put {$\bullet$} at 0 0
        \put {$\bullet$} at 15 0
\put {$1$} at 0 5
\put {$2$} at 15 5
\put{$\ldots$} at 35 0
 \plot 40 0 55 0 /
  \plot 55 0 70 0 /
 \put {$n-1$} at 55 5
  \put {$\bullet$} at 55 0
   \put {$n$} at 70 5
  \put {$\bullet$} at 70 0
\endpicture
\end{center}
\end{example}

Let us show that when the tree is not linear we can obtain some new single-crossing profiles which are not single-crossing in the classical sense. 

\begin{example}
\label{smallstar}
Let us consider the following profile, let us call it $P$, where voters 1,2,3,4 have preference orders $a\succ_1b\succ_1c\succ_1d$, $a\succ_2c\succ_2b\succ_2d$,  $d\succ_3a\succ_3c\succ_3b$, and $c\succ_4b\succ_4a\succ_4d$, respectively.
%\[
%\begin{array}{cccc}
%1&2&3&4\\
%\hline
%a&a&d&c\\
%b&c&a&b\\
%c&b&c&a\\
%d&d&b&d
%\end{array}
%\]
Obviously, this profile is not a single-crossing in the classical sense.  If it were, then voter 3 must be either the first or the last on the line since this is the only voter who ranks $d$ above $a$. Voter 4 also must be the first or the last since she is the only voter who ranks $c$ above $a$. Finally voter 1 must be either first or the last since she is the only one who ranks $b$ above $c$.

However, it is easy to check that $P$ is single-crossing with respect to the following graph with vertices associated with voters in the following way:
\begin{center}
\beginpicture
    \setcoordinatesystem units <0.7mm,0.7mm>
    \setplotarea x from -45 to 30, y from 0 to 30
      \setlinear
 \plot 0 0 15 9 30 0 /
  \plot 15 9 15 26 /
    \put {$\bullet$} at 0 0
        \put {$\bullet$} at 15 9
            \put {$\bullet$} at 15 26
                \put {$\bullet$} at 30 0
  \put {$2$} at 18 11
    \put {$1$} at -2 -3
        \put {$3$} at 32 -3
            \put {$4$} at 15 30
\endpicture
\end{center}
Indeed, for pairs $(a,b)$ and $(a,c)$ we have to cut the edge $(2,4)$. For pair $(b,c)$ we have to cut $(1,2)$ and for pairs $(a,d)$, $(b,d)$ and $(c,d)$ we have to cut $(2,3)$. %There are no virtual cuts in this case. 
\end{example} 

%Firstly, we will prove that the majority relation for the single-crossing profile is transitive. We will give two proofs of this. The first proof will be direct. This fact will also follow from a more general Representative Voter Theorem. 

The following generalisation of the classical Representative Voter Theorem \cite{Roth91} follows from \cite{Demange2012} but we give a direct proof.

\begin{theorem}[Representative Voter Theorem]
\label{RVT}
Let $n$ be odd. If a profile $P=(\row Pn)$  is single-crossing with respect to a tree $T=(V,E)$, then there exists $i\in \{1,\ldots,n\}$ such that    preference order $P_i$  coincides with  the majority relation. % $\succ $ associated with $P$. %In particular, the majority relation is transitive. 
\end{theorem}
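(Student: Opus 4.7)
The plan is to identify a single voter whose preference matches the strict majority on every pair of alternatives. The natural candidate is the \emph{centroid} of the tree $T$, i.e.\ a vertex $v^* \in V$ such that every component of $T - v^*$ has at most $(n-1)/2$ vertices. I would first establish that such a $v^*$ exists. Orient each edge $e=(u,v)$ of $T$ from the smaller to the larger of the two subtrees produced by removing $e$; since $n$ is odd, one side is always strictly larger. If some vertex $u$ had two outgoing edges, say to $v_1$ and $v_2$, then both of the disjoint subtrees on the far sides of $(u,v_1)$ and $(u,v_2)$ would have more than $n/2$ vertices, which is impossible. Hence out-degrees are at most one; since $T$ is acyclic, following the orientations from any vertex must reach a sink, and that sink $v^*$ is a centroid (every incident edge points toward it).

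The second step is to upgrade the centroid property from edges incident to $v^*$ to \emph{all} edges: for every edge $e \in E$, the subtree of $T-e$ containing $v^*$ has more than $n/2$ vertices. For edges incident to $v^*$ this is the defining property. For an edge $e=(u,w)$ not incident to $v^*$, let $u$ be the endpoint closer to $v^*$; then the subtree of $T-e$ on the $w$-side is contained in a single component of $T-v^*$, which has size at most $(n-1)/2$, so the $v^*$-side has size at least $(n+1)/2 > n/2$.

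Now I would combine this with single-crossingness. Fix any ordered pair $(a,b)$ with $a \neq b$. If the $ab$-cut is virtual, all voters — in particular $v^*$ — agree on $a$ versus $b$, and this common preference is trivially the majority view. Otherwise, by definition there is an $ab$-cut edge $e$ for which the two subtrees of $T-e$ induce the partition $V = V_{ab} \cup V_{ba}$. By the previous step, $v^*$ lies in the strictly larger of these two subtrees, so $|V_{ab}| > n/2$ exactly when $v^* \in V_{ab}$, i.e.\ exactly when $a \succ_{v^*} b$. Thus $a \succ_{v^*} b$ iff $a \succ b$ in the strict majority relation, so $P_{v^*}$ coincides with the strict majority relation.

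The main obstacle is the edge-to-centroid lifting in the second step: the single-crossing property can use \emph{any} edge as an $ab$-cut, not only edges at $v^*$, so the argument must show that the centroid's ``majority side'' property propagates to every edge of $T$. Once this is established, the rest of the argument is essentially a translation of the classical one-dimensional median-voter argument to trees.
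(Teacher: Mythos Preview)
Your proof is correct and takes a genuinely different route from the paper's. The paper invokes the Helly property of subtrees: each set $V_{xy}$ with $x\succ y$ in the majority relation is a subtree containing a strict majority of vertices, so any two of them intersect, and by Helly the whole family has a common vertex, which is then a representative voter. Your argument instead pins down a \emph{specific} representative voter, the centroid of $T$, by a direct counting argument that never appeals to Helly. What your approach buys is constructiveness and self-containment: you exhibit the representative voter explicitly (and your orientation argument gives a linear-time way to find it), and you need no external convexity lemma. What the paper's approach buys is brevity and a clearer link to abstract convexity---once one accepts the Helly property for subtrees, the result is two lines---and it naturally yields the entire set of representative voters (the common intersection), not just one. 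The two arguments are essentially dual: Helly guarantees a point lying on the majority side of every cut, while your centroid argument verifies directly that a particular point has this property.
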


\begin{proof}
Firstly we note that there is a natural absract convexity on trees \cite{EdelmanJamison1985}: the set is called convex if it is connected.   It is easy to check that it satisfies the {\em Helly property} \cite{Bollobas1986}: if for any family $\mathcal H$ of convex sets in which every two sets have a non-empty intersection we have $\bigcap_{H\in \mathcal H} H \ne \emptyset$.

Let, as above, $V_{xy}$ be the set of voters who prefer $x$ to $y$. These sets are convex.
Consider the set of subsets ${\mathcal M}= \{H_{xy}\mid x\succ y\}$, where $\succ $ is the majority relation. Any two subsets $H_{xy}, H_{zt}\in {\mathcal M}$ have a nonempty intersection since each of them contains a majority of all vertices. By the Helly property we have
$
\bigcap_{H_{xy}\in {\mathcal M}} H_{xy}\ne \emptyset.
$
Voters'  preferences in this intersection coincide with the majority relation.
\end{proof}

In Example~\ref{smallstar} voter 2 is the representative voter.

We need one easy observation.

\begin{lemma}
\label{l1}
Let $\{\row Pn\}$ is a set of distinct linear orders over a set of alternatives $A$, and let $\row kn$ be positive integers. Consider a profile $Q=(P_1^{k_1},\ldots, P_n^{k_n})$, where linear order $P_i$ is repeated $k_i$ times. Then a  profile $P=(\row Pn)$ is single-crossing on a tree iff  $Q=(P_1^{k_1},\ldots, P_n^{k_n})$ is also single-crossing on (another) tree. 
\end{lemma}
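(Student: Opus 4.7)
The claim has two directions; both will be handled by simple surgery on the tree.

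For the forward direction, assume $P=(P_1,\ldots,P_n)$ is single-crossing on a tree $T=(V,E)$ with $V=\{1,\ldots,n\}$. I construct a tree $T'$ on $\sum_{i} k_i$ vertices by taking $T$ and, for each $i\in V$, attaching $k_i-1$ new pendant leaves to vertex~$i$; each such leaf is labelled with the same preference $P_i$ as~$i$. The resulting graph is clearly still a tree, and it hosts exactly the profile $Q$. For any pair of alternatives $(a,b)$, let $e=(i,j)$ be the $ab$-cut of $P$ in $T$ (if it is virtual, the same cut remains virtual in $T'$). Deleting the same edge $e$ from $T'$ splits $T'$ into two subtrees which are precisely the subtrees on the two sides of $e$ in $T$, each enriched with the pendant leaves hanging off its vertices; since a pendant leaf inherits the preference of its parent, all voters in one component still rank $a$ above $b$ and all voters in the other rank $b$ above $a$. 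Hence $Q$ is single-crossing on $T'$.

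For the converse, suppose $Q$ is single-crossing on some tree $T'$. The key observation is that voters carrying the \emph{same} linear order form a subtree of $T'$. Indeed, pick two such voters $u,v$ and a vertex $w$ on the unique path between them in $T'$. For every pair $(a,b)$ of alternatives, $u$ and $v$ lie on the same side of the (possibly virtual) $ab$-cut, so the single edge of the cut must lie either entirely on the $u$-side or entirely on the $v$-side of the path; in either case $w$ is on the same side as $u$ and $v$, i.e., $w$ ranks $a$ and $b$ in the same order as $u$. Since this holds for every pair, $w$ has the same preference as~$u$, proving the claim.

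I then contract each of these same-preference subtrees to a single point, obtaining a graph $T$ on $n$ vertices (one per distinct order $P_i$). Contracting a connected subgraph of a tree yields a tree, so $T$ is indeed a tree. To verify single-crossing of $P$ on $T$, fix a pair $(a,b)$; let $e=(x,y)$ be the $ab$-cut of $Q$ in $T'$ (virtual cuts remain virtual). Since $x$ and $y$ disagree on $(a,b)$, they belong to different same-preference subtrees, so $e$ survives the contraction as an edge of $T$; removing it from $T$ produces two subtrees which are exactly the contractions of the two sides of the cut in $T'$, and these now partition the $n$ distinct voters by their ranking of $(a,b)$, as required.

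The main subtlety — and the step I would write out most carefully — is the subtree claim for same-preference voters; everything else is routine verification of the definition. The forward direction should also flag that one could equally replace vertex $i$ with a path of $k_i$ copies, but the pendant-leaf construction is the cleanest.
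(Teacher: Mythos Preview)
Your proof is correct and follows essentially the same approach as the paper: graft extra copies onto the tree for the forward direction, and contract the same-preference classes (after showing they form subtrees) for the converse. The only cosmetic difference is that the paper inserts the extra copies by subdividing an edge incident to each vertex rather than attaching pendant leaves, and it phrases the subtree claim as the intersection $\bigcap_{a\succ_i b} T_{ab}$ rather than via your path argument---both are equivalent.
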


\begin{proof}
Suppose $P$ is single-crossing on a  tree $T$ with $n$ vertices $1,2,\ldots, n$  in which we identified $P_i$ with $i$. In the trivial case where $n=1$, $Q=(P_1^{k_1})$ is single-crossing on a line of vertex 1 repeated $k_1$ times, and for all pairs of alternatives the cut is virtual. Otherwise, $n>1$ and we take an arbitrary edge $(i,j)$ now and replace it with the line $(i,i_1,\ldots,i_k,j)$.
\iffalse
 \begin{center}
\beginpicture
    \setcoordinatesystem units <0.7mm,0.7mm>
    \setplotarea x from -50 to 30, y from 0 to 10
      \setlinear
 \plot 0 0 15 0 30 0 /
    \put {$\bullet$} at 0 0
        \put {$\bullet$} at 15 0
\put {$i=i_1$} at 0 5
\put {$i_2$} at 15 5
\put{$\ldots$} at 35 0
 \plot 40 0 55 0 /
  \plot 55 0 70 0 /
 \put {$i_k$} at 55 5
  \put {$\bullet$} at 55 0
   \put {$j$} at 70 5
  \put {$\bullet$} at 70 0
\endpicture
\end{center}
\fi
If we now associate $k$ copies of $P_i$ with vertices $\row ik$ and do it for every $i$, we claim that $Q$ is single-crossing on this new so obtained tree $T'$. The cuts for the new tree should be as follows: If the edge $(i,j)$ had a cut for some pair of alternatives, then $T'$ should have the cut for the same pair of alternatives  done on the edge $(i_k,j)$ of $T'$.

Suppose now that $Q$ is single-crossing for some tree $T$. For every pair of alternatives $(a,b)$ such that $a\succ_i b$  let us consider a subtree $T_{ab}$, where voters corresponding to vertices all prefer $a$ to $b$. Then  the subtree
$
T_P=\bigcap_{a\succ_ib}T_{ab} 
$
is nonempty and contains all vertices corresponding to $k_i$ copies of $P_i$ in $Q$. Since there are no cuts to edges of this tree we can glue all its vertices together and associate with $P_i$ in $P$. We  leave all cuts in the new tree where they were.
\end{proof}

Let $P$ be a profile. By $\mathcal{D}(P)$ we denote the set of all unique linear orders present in $P$.

\begin{theorem}
\label{cd_theorem}
Let $P$ be a profile which is single-crossing with respect to a tree. Then  $\mathcal{D}(P)$ is a Condorcet domain.
\end{theorem}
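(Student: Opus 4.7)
My plan is to apply the Helly property for subtrees of $T$ directly to the original profile $P$, and then translate the resulting combinatorial restriction on $\mathcal{D}(P)$ into transitivity of the strict majority for any profile drawn from $\mathcal{D}(P)$.

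Let $\mathcal{D}(P)=\{Q_1,\ldots,Q_k\}$ and fix an arbitrary profile $P^*=(Q_1^{m_1},\ldots,Q_k^{m_k})$ with non-negative integer multiplicities summing to $W>0$. For any ordered pair $(x,y)$ of alternatives, set $U_{xy}=\{i\in\{1,\ldots,k\}:x\succ_i y\}$, so that a strict majority $x\succ y$ in $P^*$ is exactly the condition $\sum_{i\in U_{xy}} m_i>W/2$. Assume strict majorities $a\succ b$ and $b\succ c$ in $P^*$; to prove transitivity I must deduce $a\succ c$ strictly. The key step is Helly on $T$: the three subtrees $T_{ab},T_{bc},T_{ca}$ of $T$ defined from $P$ have empty common intersection, since any voter in it would have the cyclic preferences $a\succ b\succ c\succ a$, which no linear order admits. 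By the Helly property for trees (recalled in the proof of Theorem~\ref{RVT}), at least one pairwise intersection is empty; since every order of $\mathcal{D}(P)$ is realized by some voter in $P$, this is equivalent to at least one of $U_{ab}\cap U_{bc}$, $U_{ab}\cap U_{ca}$, $U_{bc}\cap U_{ca}$ being empty.

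A short case analysis then concludes the argument. If $U_{ab}\cap U_{bc}=\emptyset$, the disjoint sets $U_{ab},U_{bc}\subseteq\{1,\ldots,k\}$ give $\sum_{i\in U_{ab}} m_i+\sum_{i\in U_{bc}} m_i\le W$, which contradicts both sums exceeding $W/2$; this case cannot occur under our hypotheses. If $U_{ab}\cap U_{ca}=\emptyset$, then every $Q_i\in U_{ab}$ must have $a\succ_i c$, so $U_{ab}\subseteq U_{ac}$ and $\sum_{i\in U_{ac}} m_i\ge\sum_{i\in U_{ab}} m_i>W/2$, yielding $a\succ c$ strictly. The symmetric inclusion $U_{bc}\subseteq U_{ac}$ handles $U_{bc}\cap U_{ca}=\emptyset$. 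In every admissible case the conclusion holds, so $\mathcal{D}(P)$ is a Condorcet domain.

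The main subtlety I expect is that $P^*$ need not itself be single-crossing on any tree when some $m_i=0$, so one cannot invoke Lemma~\ref{l1} together with the Representative Voter Theorem to handle $P^*$ directly; indeed, removing voter~2 from Example~\ref{smallstar} produces a sub-profile that admits no tree. The trick is to anchor every topological argument on the original tree $T$ and let the multiplicities $m_i$ enter only in the final counting step.
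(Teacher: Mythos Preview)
Your proof is correct, and it takes a somewhat different route from the paper's own argument.  The paper first invokes Lemma~\ref{l1} to add enough copies of the missing orders so that the test profile $Q$ becomes part of an extended profile $\widehat{Q}$ that is single-crossing on some (new) tree; it then argues that the $ac$-cut in that tree cannot lie inside $\widehat{V}_{ab}\cap\widehat{V}_{bc}$, forcing $\widehat{V}_{ac}\supseteq\widehat{V}_{ab}$ or $\widehat{V}_{ac}\supseteq\widehat{V}_{bc}$ and hence $V_{ac}\supseteq V_{ab}$ or $V_{ac}\supseteq V_{bc}$ back in $Q$.  You instead stay on the original tree $T$ for $P$, use Helly directly on the three subtrees $T_{ab},T_{bc},T_{ca}$ to obtain that one of $U_{ab}\cap U_{bc}$, $U_{ab}\cap U_{ca}$, $U_{bc}\cap U_{ca}$ is empty as a statement purely about the domain $\mathcal{D}(P)$, and only then bring in the multiplicities $m_i$ for the counting step.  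Your packaging cleanly separates the topological fact about $\mathcal{D}(P)$ from the majority computation in $P^*$, and it avoids the detour through Lemma~\ref{l1}; the paper's version, on the other hand, makes explicit that any such $Q$ embeds in a single-crossing-on-a-tree profile.  At bottom both arguments exploit the same obstruction (no voter can sit in all three of $T_{ab},T_{bc},T_{ca}$), so the difference is one of organisation rather than of idea.
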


\begin{proof}
Let $Q=(\row Qm)$ be a profile with $Q_i\in \mathcal{D}(P)$ for all $i$. By Lemma~\ref{l1} we can add linear orders $Q_{m+1},\ldots, Q_n$ so that the extended profile $ \widehat{Q}=(\row Qn)$ is single-crossing on a tree $T=(\widehat{V},E)$, where $\widehat{V}=\{1,\ldots,n\}$. We will also denote $V=\{1,\ldots,m\}$. Suppose  that  $a\succ b$ and $b\succ c$, where $\succ $ is the majority relation for $Q$. Then we have two partitions of $V$, namely, $V=V_{ab}\cup V_{ba}$ and $V=V_{bc}\cup V_{cb}$, where $V_{xy}$ is the set of voters in $\{1,\ldots,m\}$ who prefer $x$ to $y$. We have $|V_{ab}|>|V_{ba}|$ and $|V_{bc}|> |V_{cb}|$. Obviously, $V_{abc}=V_{ab}\cap V_{bc}\ne \emptyset$ since each of these sets contains a majority of voters. We have $V_{ab}\subseteq \widehat{V}_{ab}$ and $V_{bc}\subseteq \widehat{V}_{bc}$, where $\widehat{V}_{ab}$ and $\widehat{V}_{bc}$ are connected. We cannot claim that $\widehat{V}_{ab}$ or $\widehat{V}_{bc}$ contains more than half of all elements of $\widehat{V}$ but we know that $\widehat{V}_{ab}\cap\widehat{V}_{bc}\supseteq V_{abc}\ne \emptyset$. Since $ac$-cut in $T$ cannot be located between vertices of $\widehat{V}_{ab}\cap\widehat{V}_{bc}$, we have either $\widehat{V}_{ac}\supset \widehat{V}_{ab}$ or $\widehat{V}_{ac}\supset \widehat{V}_{bc}$ which implies $V_{ac}\supset V_{ab}$ or $V_{ac}\supset V_{bc}$. This means $a\succ c$ and $\succ$ is transitive.
\end{proof}

\subsection{All trees  produce Condorcet domains}

For three alternatives we do not get anything new. We omit this proof due to space constraints. We will now prove that for an arbitrary tree $T$ there exists of a profile $P$ which is single-crossing with respect to $T$ and $T$ is minimal. In fact we prove the existence of a profile $P$ over $A$ where $|A|=|V|$, i.e., $m=n$. We note, however, that it is not necessary for the number of voters to be the same as the number of alternatives.

\begin{definition}
Let $P$ be a profile over a set of candidates $A$ and let $T=(V,E)$ be a tree. If an edge $(i,j)\in E$ is not the $ab$-cut for any alternatives $a,b\in A$ then we say that $(i,j)$ is a {\em collapsible edge}. If $T$ contains a collapsible edge then we say $T$ is {\em collapsible} with respect to $P$. Otherwise we say that $T$ is {\em minimal} with respect to $P$. 
%If $P$ is single-crossing with respect to $T$ and $T$ is minimal with respect to $P$, we say that the pair $(T,P)$ is single-crossing and minimal. 
We call a profile {\em reduced} if it does not contain identical linear orders.
\end{definition}

\begin{theorem}
Let $T=(V,E)$ be a tree with $|V|=n\geq2$. Then there exists a reduced $n$-voter profile $P$ over a set of alternatives $A=\{a_1,\ldots, a_n\}$ that is single-crossing with respect to $T$ and $T$ is minimal for $P$. Moreover, $n$ is the smallest number of alternatives that allows, for any tree $T$ with $n$ vertices, to construct a profile which is single-crossing with respect to $T$.
\end{theorem}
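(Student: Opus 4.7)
I would split the statement into two parts: (i) existence of a reduced $n$-voter profile over $n$ alternatives that is single-crossing on $T$ with $T$ minimal, and (ii) optimality of the bound $n$ on the number of alternatives.

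For (i) I would induct on $n$. The base $n=2$ is immediate: two voters with reversed preferences on $\{a_1,a_2\}$, the unique edge being the cut. For the step, pick any leaf $\ell$ of $T$, let $u$ be its neighbor, and let $T'=T-\ell$. By the inductive hypothesis there is a reduced profile $P'$ on $T'$ over $A'=\{a_1,\ldots,a_{n-1}\}$ that is single-crossing on $T'$ with $T'$ minimal. Extend $P'$ to a profile $P$ on $T$ over $A=A'\cup\{a_n\}$ as follows: for each $i\neq\ell$ let $P_i$ be $P_i'$ with $a_n$ appended at the bottom, and let $P_\ell$ be $a_n$ followed by a copy of $P_u'$. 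The profile is reduced, because the position of $a_n$ distinguishes $P_\ell$ from every $P_i$ with $i\neq\ell$. For every pair drawn from $A'$, the cut edge from $T'$ still works in $T$: since $P_\ell$ and $P_u$ agree on $A'$, voter $\ell$ lies on $u$'s side of every inherited cut. For every pair $(a_n,a_i)$ with $i<n$, voter $\ell$ is the unique voter preferring $a_n$, hence the new edge $(\ell,u)$ is the corresponding cut. Thus $P$ is single-crossing on $T$, and since every edge is a cut, $T$ is minimal.

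For (ii) I would exhibit the star $K_{1,n-1}$ with centre $c$ and leaves $\ell_1,\ldots,\ell_{n-1}$ as a witnessing tree. Suppose some profile $P$ is single-crossing on this star with the star minimal. Each edge $(c,\ell_i)$ must be the cut for some pair $p_i=\{x_i,y_i\}$; removing it yields the partition $\{\ell_i\}$ versus $V\setminus\{\ell_i\}$, so after renaming $x_i,y_i$ if needed I may assume $\ell_i$ is the unique voter preferring $x_i$ to $y_i$. Distinct leaf edges give rise to distinct pairs, since the singletons $\{\ell_i\}$ are pairwise distinct. For any $j\neq i$ the set of voters preferring $x_j$ to $y_j$ is exactly $\{\ell_j\}$, so $\ell_i$ agrees with $c$ on $p_j$. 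It follows that $P_{\ell_i}$ is obtained from $P_c$ by the transposition of $x_i$ and $y_i$, which is a linear order only when $x_i$ and $y_i$ are adjacent in $P_c$. Since $P_c$ admits $m-1$ adjacent pairs and we need $n-1$ pairwise distinct ones, $m-1\ge n-1$, i.e., $m\ge n$.

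The main obstacle is engineering the inductive extension so that every cut inherited from $T'$ still works after attaching $\ell$: the trick of setting $P_\ell$ equal to $P_u'$ on the old alternatives places $\ell$ automatically on $u$'s side of each old cut, and placing $a_n$ at opposite extremes of $P_\ell$ versus the remaining voters supplies the cut for the new edge. The lower-bound part becomes short once one observes that in a star the only connected proper subsets containing a leaf $\ell_i$ are $\{\ell_i\}$ and $V\setminus\{\ell_i\}$, which forces the ``single adjacent transposition'' structure on leaf preferences.
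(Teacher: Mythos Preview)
Your argument for part (i) is correct and follows the same inductive strategy as the paper: remove a leaf, apply the induction hypothesis, reattach the leaf together with one new alternative. The only difference is cosmetic --- the paper inserts the new alternative adjacent to a fixed old alternative, whereas you place it at the bottom for every old voter and at the top for the new leaf --- but both constructions yield the required conclusions.

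Part (ii), however, contains a genuine gap. From ``$\ell_i$ agrees with $c$ on $p_j$ for every $j\neq i$'' you conclude that $P_{\ell_i}$ is obtained from $P_c$ by the single transposition of $x_i$ and $y_i$. This does not follow: you have only checked agreement on the $n-2$ particular pairs $p_j$, not on all pairs other than $p_i$. The edge $(c,\ell_i)$ may well be the cut for many pairs besides the one $p_i$ you happened to pick, and $P_{\ell_i}$ will differ from $P_c$ on every one of them. For a concrete instance take $n=3$ with $P_c=abxy$, $P_{\ell_1}=baxy$, $P_{\ell_2}=yxab$; this profile is single-crossing on the star with the star minimal, yet $P_{\ell_2}$ differs from $P_c$ on five pairs, not one.

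Your direct counting idea can nevertheless be repaired, and then it is arguably cleaner than the paper's inductive argument on the star. For each $i$ let $D_i$ be the set of pairs on which $\ell_i$ and $c$ disagree; by single-crossingness $D_i$ is exactly the set of pairs whose cut is the edge $(c,\ell_i)$, so the $D_i$ are nonempty (by minimality) and pairwise disjoint (a pair has at most one cut edge). Since $P_{\ell_i}\neq P_c$, at least one pair that is \emph{adjacent in $P_c$} must lie in $D_i$: if every consecutive pair of $P_c$ were ordered the same way in $P_{\ell_i}$, the two linear orders would coincide. Choosing one such adjacent pair $q_i\in D_i$ for each $i$ gives $n-1$ pairwise distinct adjacent pairs of $P_c$, and since $P_c$ has only $m-1$ adjacent pairs, $m\ge n$.
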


\begin{proof}
We prove the first part of the theorem by induction on $n$. 
The base case is the only tree with two vertices for which the result is clear. 

\iffalse
Then $A=\{a_1,a_2\}$ and the profile $P$ and tree $T$ are as follows:
\hspace*{2cm}
\begin{minipage}{2.5cm}
\[
\begin{array}{cccc}
1&2\\
\hline
a_1&a_2\\
a_2&a_1
\end{array}
\]
\end{minipage}
\begin{minipage}{3cm}
\begin{center}
\beginpicture
    \setcoordinatesystem units <1mm,1mm>
    \setplotarea x from -10 to 40, y from 0 to 30
      \setlinear
\plot 10 10 30 10 /
\put {$\bullet$} at 10 10
\put {$\bullet$} at 30 10
\put {$1$} at 7 10
\put {$2$} at 33 10
\endpicture
\end{center}
\end{minipage}

\noindent $T$ is minimal because its only edge, $(1,2)$ is the $a_1a_2$-cut, and since $a_1$ and $a_2$ are the only alternatives, for every pair of alternatives there exists a cut. So there exists a reduced profile $P$ that is single-crossing with respect to the only tree $T$ with 2 vertices. 
\fi

Suppose now that for an arbitrary tree $T$ on $n$ vertices there exists
a profile $P=(P_1,P_2,...,P_n)$ over $A=\{a_1,a_2,...,a_n\}$ which is single-crossing with respect to $T$ and every edge of $T$ is the $a_ia_j$-cut for some $a_i, a_j \in A$. 

Now, the inductive step. Consider the case for $T=(V,E)$, a tree with $n+1$ vertices. Remove any vertex of degree 1 (a leaf) from $T$, along with the edge that was connected to that vertex, and call the resulting tree $T'$. We label the vertices so that $n+1$ is removed, and the removed edge is $(n,n+1)$. Now we have $T'=(V',E')$ where $|V'|=n$. Then by the induction hypothesis there exists a reduced profile $P'=(P_1,P_2,\ldots,P_n)$ over a set of alternatives $A'=(a_1,a_2,\ldots,a_n)$ that is single-crossing with respect to $T'$ and $T'$ is minimal. Now add back the removed vertex and edge to recover the tree $T$. We then have $V=V'\cup \{n+1\}$ and $E=E'\cup{(n,n+1)}$. Now we extend the profile $P'$ to $P=(P_1,P_2,\ldots,P_n,P_{n+1})$ where $P_{n+1}$ is  identical to $P_n$, and then extend the set of alternatives $A'$ to $A=(a_1,a_2,...,a_n,a_{n+1})$. We add alternative $a_{n+1}$ to each of $P_1,...,P_n$ immediately after $a_n$ and immediately before $a_n$ in $P_{n+1}$.  Now voters $n$ and $n+1$ do not have identical preferences over $A$, and now we have a set of $n+1$ linear orders over $n+1$ alternatives and the resulting profile is reduced. Since $a_n$ and $a_{n+1}$ appear consecutively in all lists of preferences, $a_n\, \succ_j\, a_i$ if and only if $a_{n+1}\, \succ_j\, a_i$ for any $a_i\in A\setminus\{a_n,a_{n+1}\}$ and $j=1,2,\ldots,n+1$. Therefore the subtree where $a_n$ is rated above $a_i$ is identical to the subtree where $a_{n+1}$ is rated above $a_i$, so the $a_ia_n$-cut is the same as the $a_ia_{n+1}$-cut so the $a_ia_{n+1}$-cut exists for $i=1,2,\ldots, n-1$. Also $a_na_{n+1}$-cut  is the new edge  $(n,n+1)$.  It follows that a cut exists for every pair of alternatives, so $P$ is single-crossing with respect to $T$ and $T$ is minimal.

Hence by induction for any tree $T$ with $|V|=n\geq2$ there exists a profile $P$ over a set of alternatives $A$ that is single-crossing with respect to $T$, where $T$ is minimal.

To prove the second part let us consider the star graph $S_n$ on $n$ vertices whose one vertex has degree $n-1$ and all others are leaves.  Reasoning by induction let us assume that for $S_n$ we cannot construct a profile with less than $n$ alternatives for which $S_n$ is the minimal tree. Consider $S_{n+1}$ and suppose towards a contradiction that we can find a profile $P=(\row P{n+1})$ with $n$ alternatives for which $S_{n+1}$ is minimal. Suppose that the vertices of $S_{n+1}$ are numbered so that voter 1 has degree $n$ and has preferences expressed by the linear order $a_1\succ_1 a_2\succ_1\ldots \succ_1 a_{n}$.  Suppose, first, the $a_1a_2$-cut is not virtual for $S_{n+1}$. Without loss of generality we may assume that $a_2\succ_{n+1} a_1$. Then $a_1\succ_i a_2$ for all $i\le n$. let us remove now vertex $n+1$ from the graph. Then we get graph $S_n$ and the profile $P'=P_{-(n+1)}$ which is single-crossing on $S_n$. The $a_1a_2$-cut  becomes virtual for $P'$. Let us now remove the alternative $a_1$ from $P'$ to obtain a profile $P''$. Then $P''$ is still single-crossing on the tree $S_n$ and we claim that $S_n$ is minimal for $P''$. If not, then after removal of $a_1$, at least one edge, say $(1,k)$ becomes without any cuts. Hence the only cut it had was the cut $a_1a_j$ for some $j>2$. Then we had $a_j\succ_k a_1\succ_k a_2$ and see that $(1,k)$ had also a $a_2a_j$-cut. Thus $S_n$ is minimal for $P''$ and $P''$ has $n-1$ alternative which is a contradiction.% that proves the theorem. 
\end{proof}

We will now make a trivial but very useful observation. 

\begin{proposition}
\label{leaves}
Let $P=(\row Pn)$ be a reduced profile which is single-crossing with respect to a minimal tree $T$. Then a voter $i$ is a leaf (vertex of degree 1) of $T$ iff there exist a pair of alternatives $a,b\in A$ such that $a\succ_i b$ and $b\succ_j a$ for all $j\in N\setminus \{i\}$.
\end{proposition}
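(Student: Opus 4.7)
The plan is to establish both directions directly from the definition of an $ab$-cut together with the hypotheses of minimality and single-crossingness.

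For the forward direction, assume $i$ is a leaf of $T$, so $i$ is incident to a unique edge $e=(i,j)$. By minimality of $T$ with respect to $P$, the edge $e$ is an $ab$-cut for some pair of alternatives $a,b\in A$. Removing $e$ disconnects $T$ into two subtrees; since $e$ is the only edge touching $i$, one of these subtrees has vertex set $\{i\}$ and the other has vertex set $V\setminus\{i\}$. By the definition of an $ab$-cut, all voters on one side strictly prefer $a$ to $b$ and all voters on the other side strictly prefer $b$ to $a$. Relabelling $a$ and $b$ if necessary, we may assume $a\succ_i b$ and then $b\succ_k a$ for every $k\in V\setminus\{i\}$, as required.

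For the reverse direction, suppose a pair $a,b\in A$ exists with $a\succ_i b$ and $b\succ_k a$ for all $k\in V\setminus\{i\}$. Then $V_{ab}=\{i\}$ and $V_{ba}=V\setminus\{i\}$ are both nonempty, so the $ab$-cut is not virtual. Since $P$ is single-crossing with respect to $T$, there is an edge $e\in E$ whose removal yields two subtrees whose vertex sets are exactly $\{i\}$ and $V\setminus\{i\}$. The only way the singleton $\{i\}$ can be the vertex set of a connected component obtained by deleting a single edge from the tree $T$ is if every edge of $T$ incident to $i$ equals $e$; hence $\deg_T(i)=1$ and $i$ is a leaf.

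There is no substantive obstacle here: the argument is a direct unpacking of the definitions of leaf, $ab$-cut, minimal tree and single-crossingness. The only point that warrants a sentence of comment is the purely graph-theoretic observation that in a tree, a singleton vertex set $\{i\}$ can arise as a component of $T\setminus e$ for a single edge $e$ only when $i$ has degree one; this is what converts the existence of the distinguishing pair $(a,b)$ into the structural conclusion about $i$.
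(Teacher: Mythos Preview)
Your proof is correct and follows essentially the same approach as the paper's: both directions unpack the definitions of leaf, minimality, and $ab$-cut in the same way, with your version simply spelling out a few steps (e.g., why the cut is not virtual, and the graph-theoretic fact that a singleton component forces degree one) that the paper leaves implicit.
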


%\iffalse
\begin{proof}
If $i$ is a leaf, then it is connected to the rest of the tree with a single edge, say $(i,k)$. This edge is an $ab$-cut for some $a,b\in A$ from which  $a\succ_i b$ and $b\succ_j a$ for all $j\in N\setminus \{i\}$. Conversely, if for some $a,b\in A$ from which  $a\succ_i b$ and $b\succ_j a$ for all $j\in N\setminus \{i\}$, then the the $ab$-cut partitions $T$ into a certain subtree $T'$ and a single vertex $i$ which means it was a leaf. 
\end{proof}
%\fi

%We can now prove the uniqueness of the minimal tree.

\begin{theorem}
\label{uniqueness}
Let $P=(\row Pn)$ be a reduced profile over a set of alternatives $A$. Suppose $P$ is single-crossing. Then the minimal tree with respect to which $P$ is single-crossing is unique.
\end{theorem}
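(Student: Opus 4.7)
The plan is to show that the minimal tree is completely determined by a collection of bipartitions of $V$ that is read off from $P$ itself, without reference to any tree. For each ordered pair of alternatives $(a,b)$ with $V_{ab}$ and $V_{ba}$ both nonempty, call the unordered bipartition $\{V_{ab},V_{ba}\}$ of $V$ a \emph{cut-bipartition}, and let $\mathcal{S}(P)$ denote the set of all cut-bipartitions. By construction $\mathcal{S}(P)$ depends only on $P$.

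First, I would argue that for any tree $T$ on which $P$ is single-crossing, every element of $\mathcal{S}(P)$ must equal the split of $V$ produced by removing some edge of $T$ (here a \emph{split} of $T$ is the bipartition of $V$ obtained by deleting a single edge). This is immediate from the definition: a non-virtual $ab$-cut is, by definition, an edge of $T$ whose removal separates $V_{ab}$ from $V_{ba}$. When $T$ is minimal, every edge of $T$ is an $ab$-cut for some pair, so every split of $T$ appears in $\mathcal{S}(P)$. Therefore, for a minimal $T$, the set of splits of $T$ coincides exactly with $\mathcal{S}(P)$.

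Next, I would extract the tree metric from $\mathcal{S}(P)$. For $i,j \in V$ set
\[
d(i,j) \;=\; |\{S \in \mathcal{S}(P) : S \text{ separates } i \text{ from } j\}|.
\]
In any minimal tree $T$ for $P$, each split arises from a unique edge, and a split separates $i$ and $j$ if and only if that edge lies on the unique $i$--$j$ path in $T$. Hence $d(i,j) = d_T(i,j)$, the graph distance in $T$. Since the left-hand side depends only on $P$, the function $d_T$ is the same for every minimal tree $T$. But a tree on a labelled vertex set is determined by its distance function, since $(i,j) \in E(T)$ if and only if $d_T(i,j) = 1$. Therefore $E(T)$ is the same for every minimal $T$, which proves uniqueness.

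The main thing to check carefully is the correspondence in the first step. Well-definedness of the assignment from non-virtual pairs $(a,b)$ to splits of $T$ is built into the single-crossing definition; surjectivity onto all splits of $T$ is precisely the content of minimality (via Proposition~\ref{leaves}-style arguments at every edge, not just at leaves); and injectivity from splits to edges is automatic because two different edges of a tree produce two different bipartitions of $V$. Once these are in hand, the metric argument above is immediate and the uniqueness of the minimal tree follows with no further work.
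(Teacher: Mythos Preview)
Your argument is correct and takes a genuinely different route from the paper. The paper proceeds by induction on $n$: it picks a leaf $i$ of $T_1$, observes via Proposition~\ref{leaves} that $i$ must also be a leaf of $T_2$, deletes $i$ from both trees, applies the induction hypothesis to $P_{-i}$ to conclude $T_1'=T_2'$, and then argues that $i$ must reattach at the same vertex in both (otherwise a cut on the path between the two attachment points would force $i$ to lie in both $V_{xy}$ and $V_{yx}$). Your approach is instead a one-shot structural reconstruction: you isolate the invariant $\mathcal{S}(P)$, show that for any minimal $T$ the edge-splits of $T$ are exactly $\mathcal{S}(P)$, and then recover $E(T)$ from $\mathcal{S}(P)$ via the induced metric. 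This is cleaner in that it avoids induction and the (slightly delicate, and in the paper tacitly assumed) verification that $T_1'$ and $T_2'$ remain \emph{minimal} for $P_{-i}$; it also makes transparent that for a reduced profile every single-crossing tree is already minimal, since an edge $(i,j)$ is collapsible iff $P_i=P_j$. The paper's inductive argument, on the other hand, is more self-contained in that it does not appeal to the general fact that a labelled tree is determined by its split system (equivalently, its metric), and it dovetails naturally with the leaf-peeling recognition algorithm developed later in Section~4. One small remark: your parenthetical ``via Proposition~\ref{leaves}-style arguments at every edge'' is unnecessary---surjectivity of the edge-to-split map onto $\mathcal{S}(P)$ is literally the definition of minimality, no further argument needed.
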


\begin{proof}
Suppose for contradiction that $P$ is single-crossing with respect to both $T_1=(V_1,E_1)$ and $T_2=(V_2,E_2)$ and both are minimal. 
We prove that $T_1=T_2$ by induction on $n=|T_1|=|T_2|$.  The case $n=1$ is obvious. Now let $P$ be a profile with $n+1$ voters. Choose a leaf  from $T_1$, say vertex $i$, and let the vertex to which $i$ is connected be $j$. Since it is a leaf it must be the only linear order in which $x$ is preferred over $y$ for some $x,y\in A$ and therefore vertex $i$ must also be a leaf of $T_2$. 

Note that, if $P$ is single-crossing with respect to a tree $T $, and we remove a leaf of $T$, say vertex $\ell$ (together with the edge leading to it), then we obtain a tree $T_{-\ell}$ such that the corresponding subprofile  $P_{-\ell}=(P_1,...,P_{\ell-1},P_{\ell+1},...,P_n)$ is single-crossing with respect to $T_{-\ell}$. 
%: if we suppose that $x\succ_l y$ and $(l,*)\in E$ is the one edge connected to $l$, then either $(l,*)$ is the $xy$-cut of $T$ in which case all voters in $P_{-l}$ prefer $y$ over $x$, or in the case that any other edge is the $xy$-cut, making that same cut on $T_{-l}$ will partition the vertices into $V_{xy}\setminus l$ and $V_{yx}$, which still separates the voters who prefer $x$ over $y$ and those who prefer $y$ over $x$. This is true for all pairs of alternatives, so $P_{-l}$ satisfies single-crossedness with respect to $T_{-l}$. In particular, any subtree of $T$ can be generated by repeatedly removing leaves from $T$, so it follows that the corresponding subprofile will be single-crossing with respect to the subtree.

Therefore if we remove $i$ from $T_1$ and $T_2$ we are left with subtrees $T'_1$ and $T'_2$ and $P_{-i}$ is single-crossing with respect to both. Since $|T'_1|=|T'_2|=n$, by the induction hypothesis we must have $T'=T'_1=T'_2$. Suppose that $(i,j)\in E_1$ and $(i.k)\in E_2$. Then to recover $T_1$ from $T'$ we add the vertex $i$ and the edge $(i,j)$ to $T'$ and to recover $T_2$ from $T'$ we add the vertex $i$ and the edge $(i,k)$  to $T'$ for some $j,k$ different from $i$.  If $j\neq k$ then there must be a path in $T'$ from $j$ to $k$, along which there must be at least one $xy$-cut  since $T$ is minimal for $P$. Then $x\succ_j y$ and $y\succ_k x$ for some $x,y\in A$. Making the $xy$-cut on each of $T_1$ and $T_2$ we find that $i\in V_{xy} \subset V_1$ and $i\in V_{yx} \subset V_2$, so $x\succ_i y$ and $y\succ_i x$, a contradiction. Thus $j=k$, $E_1=E_2$ and finally $T_1=T_2$. So the tree with respect to which $P$ is single-crossing is unique.
%
%Hence by induction, for any $n$ if $P$ is single-crossing with respect to $T$ and $T$ is minimal for $P$, then $T$ is unique.
\end{proof}

We note that the new concept is not hereditary, i.e., not inherited by subprofiles. It is easy to see that all subprofiles of $P$ are single-crossing  if and only if $P$ is  single-crossing in the classical sense.

\iffalse
\begin{definition}
We say that a profile is {\em single-crossing} if it  is single-crossing with respect to some tree.  We say that a profile is {\em single-crossing in the classical sense} if it is single-crossing with respect to a line. 
\end{definition}
\fi

\begin{theorem}
Let $P$ be a single-crossing profile. Then all subprofiles of $P$ are single-crossing  if and only if $P$ is  single-crossing on a line.
\end{theorem}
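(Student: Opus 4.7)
The plan is to establish both implications, with $(\Leftarrow)$ being essentially immediate and $(\Rightarrow)$ requiring the main argument via a structural obstruction at any vertex of degree $\ge 3$ in the minimal tree.

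For $(\Leftarrow)$: assume $P$ is single-crossing on a line $L=1\text{-}2\text{-}\cdots\text{-}n$, so for every pair $(a,b)$ the set $V_{ab}$ is a prefix or suffix of the voter order. Any subprofile $P'$ lives on the induced sub-line $L'$ and inherits the prefix/suffix property for each pair, hence is single-crossing on the tree $L'$.

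For $(\Rightarrow)$: I prove the contrapositive. Assume $P$ is single-crossing but not single-crossing on any line, and exhibit a subprofile that is not single-crossing on any tree. By Lemma~\ref{l1} we may pass to the reduced version of $P$ (itself a subprofile of $P$), and by Theorem~\ref{uniqueness} the minimal tree $T^*$ of $P$ is then unique; since $P$ is not single-crossing on a line, $T^*$ is not a line. Pick $v\in V(T^*)$ with $d:=\deg_{T^*}(v)\ge 3$; I claim the subprofile $P'=P_{-v}$ is not single-crossing on any tree.

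Suppose, for contradiction, that $P'$ is single-crossing on some tree $T^\sharp$. Removing $v$ from $T^*$ produces $d$ subtrees $T_1,\ldots,T_d$ partitioning $V(T^\sharp)$. Since $T^*$ is minimal, each edge from $v$ to $T_i$ is an $a_ib_i$-cut; without loss of generality voters in $T_i$ prefer $a_i$ to $b_i$ while the rest (including $v$) prefer $b_i$ to $a_i$. Hence in $P'$, $V'_{a_ib_i}=T_i$ and $V'_{b_ia_i}=V'\setminus T_i$; both sets must be convex (connected) in $T^\sharp$, so exactly one edge of $T^\sharp$ runs between $T_i$ and $V'\setminus T_i$. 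Contracting each $T_i$ in $T^\sharp$ to a single vertex then yields a quotient tree $\tilde T$ on $d$ vertices in which every vertex has degree $1$. But any tree on $d$ vertices satisfies $\sum\deg=2(d-1)$, so all vertices being leaves forces $d=2$, contradicting $d\ge 3$.

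The main obstacle is this final structural step: one has to convert the $d$ pairwise-disjoint convexity constraints on $T_1,\ldots,T_d$ (and on their complements) in the hypothetical tree $T^\sharp$ into a combinatorial impossibility, and the contract-and-count argument provides the cleanest route. The rest of the argument—reducing to a reduced profile, invoking uniqueness of the minimal tree, and locating the obstructing vertex $v$—is essentially bookkeeping built on results already in the paper.
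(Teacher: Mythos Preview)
Your argument is correct, but the paper takes a noticeably shorter path. For the nontrivial direction, the paper also locates a vertex $i$ of degree at least three in the minimal tree, but instead of removing $i$ it selects three of its neighbours $j,k,\ell$ and considers the three-voter subprofile $P'=(P_j,P_k,P_\ell)$. By minimality of $T$, each of the edges $(i,j),(i,k),(i,\ell)$ is a cut for some pair, so each of $j,k,\ell$ is, within $P'$, the \emph{unique} voter with a given pairwise preference; Proposition~\ref{leaves} then forces all three to be leaves in any tree witnessing single-crossingness of $P'$, which is impossible on three vertices.

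Your route---delete $v$ itself, observe that the $d$ components $T_1,\ldots,T_d$ of $T^*-v$ and their complements must each be connected in the hypothetical tree $T^\sharp$, then contract the $T_i$ and count degrees in the quotient---is a valid and somewhat more structural argument: it shows that \emph{any} profile whose voters split into $d\ge 3$ blocks, each singled out by some pairwise comparison, cannot be single-crossing on a tree. The paper's three-voter trick is really the special case of your contraction argument where each block is a single neighbour, and it avoids the quotient machinery by appealing directly to Proposition~\ref{leaves}. One small point: when you pass to the reduced profile and assert that its minimal tree $T^*$ is not a line, you are implicitly using not just the statement of Lemma~\ref{l1} but the specific construction in its proof (which preserves paths); this is fine, but worth making explicit.
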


\begin{proof}
It follows from the definition that if $P$ is single-crossing on a line, then every subprofile is single-crossing. 
%Then equivalently, it is single-crossing with respect to a line, or a tree where all vertices are of degree 1 or 2 (or 0 if $\textbf{P}$ is a 1-voter profile). We can remove a voter in $\textbf{P}$ corresponding to a vertex of degree 1 because that leaves a subtree of $T$; the subprofile will be single-crossing with respect to the subtree. We can also remove a voter $i$ where $deg(i)=2$, because we know that $(i,j),(i,k)\in E$ for some $j,k\in A$ and so $\textbf{P}_{-i}$ is single-crossing with respect to $T_{-i}=(V\setminus \{i\}, (E\setminus \{(i,j),(i,k)\})\cup \{(j,k)\})$, removing vertex $i$ and its 2 connected edges, and adding the edge $(j,k)$. If for $x,y\in A$ the $xy$-cut on $T$ is $(i,j)$ or $(i,k)$, the $xy$-cut on $T_{-i}$ is $(j,k)$, and all other cuts are the same edge in $T$ and in $T_{-i}$. If we can remove any vertex of degree 1 or 2 then since $T$ is a line we can remove any voter from $\textbf{P}$. Doing so repeatedly we can generate any subprofile of $\textbf{P}$ and find the tree (which is a line) with respect to which it is single-crossing.

Now suppose that  $P$ is single-crossing with respect to a minimal tree $T$ which is not a line. 
%Reduce $\textbf{P}$ to a minimal profile, $\textbf{P}'$, which by Lemma 2 (of Single-Crossedness on a Tree) is single-crossing with respect to some tree $T'$. 
This tree  therefore must have a vertex of degree greater than 2, let this be vertex~$i$ and let $j,k,\ell$ be three of its (arbitrarily chosen) neighbors. Then the subprofile
$P'=(P_j,P_k,P_\ell)$ cannot be single-crossing on any tree. If it were single-crossing on a tree $T$, then by Proposition~\ref{leaves} all three vertices of it must be leaves which is impossible.
\end{proof}
%\fi

\section{Algorithmic aspects of single-crossedness}

\subsection{A recognition algorithm.}

The main question of this section is to give a polynomial-time algorithm for recognising single-crossing profiles. The property, which was used in Proposition~\ref{leaves}  to recognise a leaf when we knew that the profile had single-crossing property becomes too weak for arbitrary profiles. 
%
\iffalse
We illustrate this with an example.
%
\begin{example}
Let us consider the following profile:
\[
\begin{array}{cccc}
1&2&3&4\\
\hline
a&b&c&d\\
b&c&d&a\\
c&d&a&b\\
d&a&b&c
\end{array}
\]
In this profile for every voter we can find a pair of alternatives for which the ranking of this particular voter is different from the rest of the voters. For example, voter 1 is the only one who ranks $a$ above $d$, voter 2 is the only one who ranks $b$ above $a$ etc. However these are not leaves since the profile is not single-crossing (there isn't a tree on four vertices all of which are leaves).
\end{example} 
\fi
%
We need a stronger property.

\begin{definition}
\label{poleaf}
Let $P=(\row Pn)$ be a profile. We call $i\in N$ a {\em potential leaf} if 
\begin{itemize}
\item[(a)] The set $S_i=\{(a,b)\in A^2\mid a\succ_i b\ \text{and}\ b\succ_j a\ \text{for $j\ne i$}\}$ is nonempty;
\item[(b)] There exists $k\in N$ such that $a\succ_i b \Leftrightarrow a\succ_k b$ for all $a,b\in A$ such that neither $(a,b)$ or $(b,a)$ belongs to $S_i$.
\end{itemize}
\end{definition}

Just as a quick reality check we verify that if a profile $P$ is single-crossing with respect to  a tree $T$, then a vertex is a leaf of $T$ if and only if it is a potential leaf.
%
\iffalse
\begin{lemma}
\label{pot=leaf}
If a profile $P$ is single-crossing with respect to  a tree $T$, then a vertex is a leaf of $T$ if and only if it is a potential leaf.
\end{lemma}
%
\begin{proof}
Let $i$ be a leaf of $T$ which is connected to the vertex $k$. Then 
\[
S_i=\{(a,b)\in A^2\mid \text{$(i,k)$ is an $ab$-cut and $a\succ_i b$}\}
\]
and $k\in N$ clearly satisfy conditions (a) and (b). The converse follows from Proposition~\ref{leaves}. 
\end{proof}
\fi
%
So leaves and potential leaves are equivalent for single-crossing profiles. However potential leaves are more useful in the general case.

\begin{lemma}
Let $P=(\row Pn)$ be a profile and $i\in N$ is a potential leaf of $P$. Then $P$ is single-crossing if and only if $P_{-i}$ is.
\end{lemma}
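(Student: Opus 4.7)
The plan is to prove the two directions of the biconditional separately, using parts~(a) and~(b) of Definition~\ref{poleaf} for different purposes.

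For the ``only if'' direction, I will assume $P$ is single-crossing with respect to some tree $T$ and first show that $i$ must be a leaf of $T$. By part~(a), there exists $(a,b)\in S_i$, and for this pair $V_{ab}=\{i\}$. Since the $ab$-cut partitions $V$ into two subtrees, one being the singleton $\{i\}$, the removed edge must be the unique edge incident to $i$, so $i$ is a leaf. Now delete $i$ together with its incident edge from $T$ to obtain the subtree $T_{-i}$. For any pair $(c,d)$, the $cd$-cut in $T$ is either that leaf-edge---in which case the remaining voters $N\setminus\{i\}$ all agree on $c$ versus $d$, making the cut virtual in $T_{-i}$---or another edge of $T$, which remains a $cd$-cut in $T_{-i}$ because removing a leaf from either side preserves connectedness. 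Hence $P_{-i}$ is single-crossing with respect to $T_{-i}$.

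For the ``if'' direction, I will assume $P_{-i}$ is single-crossing with respect to a tree $T'=(N\setminus\{i\},E')$ and pick $k\in N\setminus\{i\}$ as in part~(b) (the case $n=1$ is vacuous, and we read Definition~\ref{poleaf} as requiring $k$ distinct from $i$ so that part~(b) is substantive). Form $T$ from $T'$ by adjoining the vertex $i$ and the edge $(i,k)$; this is still a tree, with $i$ as a leaf. I will verify single-crossingness of $P$ with respect to $T$ by case analysis on each pair $(a,b)\in A^2$. If $(a,b)\in S_i$ or $(b,a)\in S_i$, then $V_{ab}$ in $P$ equals $\{i\}$ or $N\setminus\{i\}$, and the new edge $(i,k)$ is the $ab$-cut. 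Otherwise, part~(b) ensures that $i$ and $k$ agree on the pair $(a,b)$, so a virtual $ab$-cut of $T'$ remains virtual in $T$ (since $i$ sides with the consensus represented by $k$), while a real $ab$-cut-edge $e$ of $T'$ is still an $ab$-cut of $T$, because attaching the leaf $i$ to $k$'s side of $T'\setminus e$ preserves both the partition $V=V_{ab}\cup V_{ba}$ and the connectedness of each part.

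The main obstacle is the ``if'' direction: one must check that the newly added leaf $i$ slots into the correct subtree for every pre-existing cut of $T'$ and that the new edge $(i,k)$ simultaneously realises all the remaining cuts. This is controlled by the clean dichotomy in Definition~\ref{poleaf}: property~(b) governs precisely the pairs for which $T'$ already provides a cut, while property~(a) bundles the rest together into a single cut on the new edge $(i,k)$.
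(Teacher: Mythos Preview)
Your proof is correct and follows essentially the same two-direction argument as the paper: attach $i$ to the witness $k$ from part~(b) for the ``if'' direction, and delete the leaf $i$ for the ``only if'' direction, checking cuts case by case. One minor improvement over the paper is that you deduce $i$ is a leaf directly from part~(a) for \emph{any} tree realizing $P$, whereas the paper passes through the minimal tree; your handling of the $k\neq i$ reading of Definition~\ref{poleaf} is also appropriate.
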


\begin{proof}
Suppose $i$ is a potential leaf and that $P_{-i}$ is single-crossing with respect to a tree $T=(N\setminus \{i\},V)$. According to Definition~\ref{poleaf}, since $i$ is a potential leaf of $P$, all $ab$-cuts for $T$, where either $(a,b)$ or $(b,a)$ belongs to $S_i$, are virtual. We add $P_i$ to $P_{-i}$, add $i$ to $T$ and connect $i$ to $k$. Then $(i,k)$ becomes an $ab$-cut for all $a,b\in A$ for which either $(a,b)$ or $(b,a)$ belongs to $S_i$. All other cuts remain valid. This proves that $P$ is single-crossing with respect to $T'=(N, V\cup \{(i,k)\}$.  

Let us prove the converse. Suppose $P$ is single-crossing and $T=(V,E)$ be its minimal tree. Then $i$ is a leaf of $T$. In general, as we know, the single-crossing property is not inherited by subprofiles. However when we remove a leaf (or several of them), then the single-crossedness is preserved. Indeed, for every pair of alternatives $x,y\in A$ the $xy$-cut is either in $S_i$, and then it becomes a virtual cut in $T_{-i}$, or it is not. In the latter case,  $V=V_{xy}\cup V_{yx}$, and suppose without loss of generality that $i\in V_{xy}$.  Since $i$ was a leaf of $T$ it will also be a leaf of $V_{xy}$, hence $V_{xy}\setminus \{i\}$ is a subtree of $T_{-i}$. Hence $T_{-i}$ has a $xy$-cut and $P_{-i}$ is single-crossing with respect to $T_{-i}$.
\end{proof}

The idea of the recognition algorithm is now clear: we look for a potential leaf and stop if we cannot find any. Otherwise we remove the potential leaf, remember where it was attached and reduce the problem to the remaining subprofile. For a linear order $P_i$ to calculate $S_i$ we need ${m \choose 2}n$ operations, to find $k_i$ we need the same number of comparisons and we have to try in the worst case scenario  all $n$ linear orders so $2{m \choose 2}n^2$ operations in total. We have to do it recursively $n$ times so the total number of operations is at most $2{m \choose 2}n^3$. We have proved

\begin{theorem}
\label{conmintree}
For an input profile with $n$ voters and $m$ alternatives we can determine whether or not this profile is single-crossing or not and, if it is, to construct the minimal tree in $O(m^2n^3)$ time.
\end{theorem}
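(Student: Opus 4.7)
The plan is to formalize the recognition algorithm that was sketched in the paragraph preceding the theorem and to bound its running time. Given a profile $P=(P_1,\ldots,P_n)$, the algorithm proceeds recursively: search for a potential leaf (Definition~\ref{poleaf}); if none is found, declare $P$ not single-crossing; if $i$ is a potential leaf with witness $k\in N$ as in Definition~\ref{poleaf}(b), recursively run the algorithm on the subprofile $P_{-i}$ and, if it succeeds, return the tree obtained by attaching vertex $i$ to vertex $k$.

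Correctness is an immediate consequence of results already in hand. The preceding lemma shows that if $i$ is a potential leaf of $P$, then $P$ is single-crossing if and only if $P_{-i}$ is, and gluing $i$ to $k$ in the minimal tree for $P_{-i}$ produces a tree for $P$ in which the new edge $(i,k)$ is the $ab$-cut for exactly the pairs in $S_i$; by construction no existing edge loses its status as a cut, so the resulting tree is still minimal, and by Theorem~\ref{uniqueness} it is \emph{the} minimal tree. Conversely, if $P$ is single-crossing with minimal tree $T$, then Proposition~\ref{leaves} (applied to the minimal tree) shows that every leaf of $T$ is a potential leaf; hence whenever $n\ge 2$ the search for a potential leaf must succeed. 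Thus the algorithm returns "not single-crossing" only when this is correct. The base case $n=1$ is trivial.

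For the complexity, fix a voter $i$. Computing $S_i$ requires, for each of the $\binom{m}{2}$ pairs $\{a,b\}$, checking whether every other voter disagrees with $i$ on that pair, which costs $O(m^2 n)$ operations. Given $S_i$, testing whether a candidate $k$ satisfies condition (b) costs $O(m^2)$; trying all $n-1$ candidates thus adds $O(m^2 n)$. So deciding whether $i$ is a potential leaf, and locating the attachment $k$ when it is, takes $O(m^2 n)$ time. Scanning all $n$ voters per recursion level then costs $O(m^2 n^2)$, and the recursion shrinks $n$ by one at each step, so there are at most $n$ levels, giving the claimed $O(m^2 n^3)$ bound.

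The only real obstacle is the "no potential leaf $\Rightarrow$ not single-crossing" direction of correctness, and even this has already been settled: it is precisely the contrapositive of the observation that the leaves of the minimal tree (which exists by hypothesis in the single-crossing case) are potential leaves, as noted right after Definition~\ref{poleaf} and formalized in Proposition~\ref{leaves}. Everything else is the routine bookkeeping summarized above.
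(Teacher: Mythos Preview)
Your proposal is correct and follows essentially the same approach as the paper: recursively search for a potential leaf, peel it off (remembering its attachment point $k$), and count $O(m^2n)$ work per voter, $O(m^2n^2)$ per level, over $n$ levels. Your write-up is in fact more careful than the paper's, which leaves the correctness argument implicit in the preceding lemma and the remark after Definition~\ref{poleaf}; your explicit invocation of Theorem~\ref{uniqueness} to justify that the reconstructed tree is \emph{the} minimal one is a nice addition.
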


\subsection{Chanberlin-Courant rule %of fully proportional representation
}

Given a society of $n$ voters $V$ with preferences over a set of $m$ candidates $A$ and a fixed positive integer $k\le m$ a method of fully proportional representation outputs a $k$-member parliament, which is a subset of $A$, and assigns to each voter a candidate that  will represent this voter in the parliament. The Chanberlin-Courant  and the   Monroe rules were widely discussed in Political Science and Social Choice literature. 

It is well-known that on an unrestricted domain of preferences both rules are intractable in the classical  \cite{mei-pro-ros-zoh:j:multiwinner,bou-lu:c:chamberlin-courant} and parameterized complexity  \cite{bet-sli-uhl:j:mon-cc} senses.  \cite{SYFE2013}, however, showed that for single-crossing  elections the winner-determination problem for the Chanberlin-Courant rule is polynomial-time solvable for every dissatisfaction function and both for the utilitarian  \cite{cha-cou:j:cc} and egalitarian \cite{bet-sli-uhl:j:mon-cc} versions of the rule. They also generalized this result to elections with bounded single-crossing width proving fixed-parameter tractability of the Chanberlin-Courant  rule with single-crossing width as parameter.  The concept of single-crossing width was defined in \cite{CornazGS12}. 

Here we will prove that polynomial solvability remains for single-crossing profiles on any tree. But, firstly, we will remind the reader the definitions needed for discussing the Chanberlin-Courant rule. By $\text{pos}_v(c)$ we denote the position of the alternative $c$ in the ranking of voter $v$; the top-ranked alternative has position 1, the second best has position 2, etc.

\begin{definition}
\label{def:mf}
Given an $n$-profile $P$ over set~$A$  of alternatives, a mapping $r\colon P\times A \to {\Q}^+_0$
is called {\em a misrepresentation function} if for any voter $v\in N$
and any two candidates $c,c'\in A$ the condition
$\text{pos}_v(c)<\text{pos}_v(c')$ implies $r(v,c)\le r(v,c')$. 
\end{definition}
% This is to say that if~$c$ is preferred to~$c'$ in~$v$'s ranking, then the
%misrepresentation of~$v$, when she is represented by~$c'$ will be at
%least as large as her misrepresentation, when she is represented by~$c$.
In the classical framework the misrepresentation of a candidate
for a voter is a function of the position of the candidate in the preference
order of that voter given by ${\bf s}=(\row s{m})$, where $0=s_1\le s_2\le \ldots \le s_m$, that is, the misrepresentation function in this case will be
$
r(v,c)=s_{\text{pos}_v(c)}.
$
An important particular case is the \emph{Borda misrepresentation  function} defined by the vector
 $(0,1,\ldots, m-1)$ which was used in  \cite{cha-cou:j:cc}. 
 %The reason we use Borda's name here is that under the
 %optimal assignment $w$ each candidate $c\in w(V)$ will be the Borda
 %winner in the single-winner election with election $w^{-1}(c)$, that is,
 %if we conduct an election among voters assigned to $c$ using the
 %Borda rule, then $c$ will be elected.
 %
 In what follows we assume that the misrepresentation function is defined for any number of alternatives and that it is polynomial-time computable.

In the approval voting framework, if a voter is represented by a
 candidate whom she approves, her misrepresentation is  zero, otherwise it is equal to one.  This function is called
 the \emph{approval misrepresentation function}.  This
 misrepresentation function does not have to be positional since
 different voters may approve different number of candidates. 
In the general framework the misrepresentation function may be
arbitrary.  

By $w\colon N\to A$ we denote the function that assigns
voters to representatives (or the other way around), i.e., under this
assignment voter $v$ is represented by candidate $w(v)$. If $|w(N)|\le k$ we call it a $k$-assignment. The total
misrepresentation $\Phi(P,w)$ of the given election under~$w$ is then given by
$
\Phi(P,w)=\sum_{v \in N} r(v,w(v))\quad \text{or}\quad \Phi(P,w)= \max_{v \in N} r(v,w(v))
$
in the utilitarian and egalitarian versions,
respectively. The Cham\-ber\-lin-Courant rule takes the profile and the number of representatives to be elected $k$ as input and outputs an optimal $k$-assignment  $w_\text{opt}$ of voters to representatives that minimizes  the total misrepresentation $\Phi(P,w)$.  We will prove the following theorem.

\begin{theorem}
\label{CCtheorem}
For every polynomial-time computable dissatisfaction function, every positive integer $k$, and for both utilitarian and egalitarian versions of the Chanberlin-Courant rule, there is a polynomial-time algorithm that given a profile $P=(\row Pn)$ over a set of alternatives $A=\{\row am\}$, which is single-crossing with respect to some tree,  finds an optimal $k$-assignment function $w_\text{opt}$ for $P$.
\end{theorem}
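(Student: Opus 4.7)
The plan is to first recover, via Theorem~\ref{conmintree}, a concrete tree $T=(V,E)$ with respect to which $P$ is single-crossing (in time $O(m^2n^3)$), and then solve the Chamberlin--Courant problem on $T$ by a tree dynamic program that naturally generalises the line-based algorithm of~\cite{SYFE2013}.

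The key structural reduction exploits single-crossingness together with the subtree/Helly property already used in the proof of Theorem~\ref{RVT}. For any committee $W\subseteq A$, the optimal CC-assignment inside $W$ sends each voter to her top choice in $W$, and the set
\[
S_c = \{\, v\in V : c\succ_v c' \text{ for all } c'\in W\setminus\{c\}\,\} = \bigcap_{c'\in W\setminus\{c\}} V_{cc'}
\]
is an intersection of subtrees of $T$, hence a (possibly empty) connected subtree. Thus the non-empty $S_c$'s partition $V$ into at most $k$ connected subtrees, each carrying the label $c\in W$, and the CC cost of $W$ equals $\sum_{c\in W}\sum_{v\in S_c}r(v,c)$ (utilitarian) or $\max_{c\in W}\max_{v\in S_c}r(v,c)$ (egalitarian). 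Conversely, any labelled connected partition of $V$ into at most $k$ parts, with arbitrary candidate labels (possibly repeated), upper-bounds the CC cost of the committee formed by its labels. Consequently, the optimal CC cost equals the minimum \emph{labelled-partition cost} over labelled connected partitions of $T$ with at most $k$ parts, and it suffices to compute the latter.

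This last optimisation is a textbook tree DP. Root $T$ at any vertex $\rho$ and define, for each $v\in V$, $1\le j\le k$, and $c\in A$,
\[
F(v,j,c) = \min\bigl\{\text{partition cost of the subtree rooted at }v\text{ into exactly }j\text{ parts, with }v\text{'s part labelled }c\bigr\}.
\]
Leaves give $F(v,1,c)=r(v,c)$ and $F(v,j,c)=\infty$ for $j\ge 2$. For an internal $v$ with children $u_1,\ldots,u_d$, I process the children one at a time, maintaining a running table $G_i(j,c)$: the update from $G_{i-1}$ to $G_i$ via child $u_i$ takes the pointwise minimum of either (i)~\emph{merging} the edge $(v,u_i)$, which forces $u_i$'s part to carry the same label $c$ and combines parts, contributing $G_{i-1}(j_1,c)+F(u_i,j_2,c)$ with $j=j_1+j_2-1$, or (ii)~\emph{cutting} the edge, allowing an arbitrary label $c'$ on $u_i$'s part and contributing $G_{i-1}(j_1,c)+\min_{c'}F(u_i,j_2,c')$ with $j=j_1+j_2$. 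The answer is $\min_{j\le k,\,c\in A}F(\rho,j,c)$, and a back-trace recovers both the winning committee $W$ and an optimal assignment. The egalitarian version is handled by the same recursion with $+$ replaced by $\max$ in every cost combination; correctness is preserved because $\max$ is also associative and monotone.

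The running time is $O(nk^2m)$ for the DP (standard tree-DP amortisation, with $\min_{c'}F(u_i,j_2,c')$ precomputed in $O(km)$ per vertex), on top of the $O(m^2n^3)$ recognition cost---polynomial in $n$, $m$, $k$, as required. The step I expect to be most delicate is the correctness of the reduction to labelled connected partitions: one must argue that optimising over \emph{all} such partitions (rather than only the top-choice partitions of bona fide committees) does not undershoot the CC optimum, and this is precisely where single-crossingness via the subtree-intersection property earns its keep. Once that equivalence is established, the tree DP itself is routine.
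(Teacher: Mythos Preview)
Your argument is correct, but it follows a genuinely different route from the paper's. The paper fixes a leaf (voter~1), orders the alternatives by voter~1's preference, and proves a structural lemma (Lemma~\ref{termsubtr}) stating that in any optimal assignment the preimage of voter~1's \emph{least preferred} committee member is exactly some $V_{ba}$ arising from an $ab$-cut with $a\succ_1 b$. This lets the paper peel off terminal subtrees one alternative at a time in voter~1's order, yielding a DP over triples $(V_{ab},j,t)$ with the recursion
\[
A[V,j,t]=\min\Bigl\{A[V,j-1,t],\ \min_{ab\text{-cut}}\ell\bigl(A[V_{ab},j-1,t-1],(r(v,a_j))_{v\in V_{ba}}\bigr)\Bigr\},
\]
which is the direct tree analogue of the line algorithm in~\cite{SYFE2013} and runs in $O(mn^2k)$.

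You instead reduce, via the Helly/subtree-intersection property, to minimising over all labelled connected partitions of $T$ into at most $k$ parts, and then run a standard rooted tree DP $F(v,j,c)$ with merge/cut choices at each edge. The reduction is the crux, and your two-inequality argument is sound: top-choice regions of any committee are connected (this is where single-crossingness is used), giving one direction; every labelled connected partition \emph{is} a $k$-assignment, giving the other. Your approach is more modular---it cleanly separates the structural consequence of single-crossingness from an off-the-shelf tree-partition DP---whereas the paper's approach stays closer to the classical single-crossing algorithm by exploiting a distinguished voter's order on alternatives. Both are polynomial; the paper reports $O(mn^2k)$ while you obtain $O(nk^2m)$, the difference reflecting whether the recursion ranges over cuts or over edges.
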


%Let us introduce the following definition.

%\begin{definition}
Let $T$ be a tree. Any subtree $T'$ of $T$ such that $T\setminus T'$ is also a tree be called a {\em terminal subtree} of $T$. Let us denote by $st(T)$ the set of all terminal subtrees of $T$, including $T$.
%\end{definition}
%
An important example of a terminal subtree can be obtained from any $ab$-cut. Indeed, both subtreees $T_{ab}$ and $T_{ba}$, which result from this cut are terminal. Let us, for $i\in N$, denote by $st_{-i}(T)$ the set of all terminal subtrees for which $i\in N$ is not a vertex.

\iffalse
The following observation is obvious but still deserves to be explicitly formulated.

\begin{proposition}
\label{pathsubprofile}
Let $P=(\row Pn)$ be a profile which is single-crossing with respect to some tree $T$. Let $\row ik$ be a path in $T$. Then the subprofile $P'=(P_{i_1},\ldots, P_{i_k})$ is a classical single-crossing profile (with respect to the line $\row ik$).
\end{proposition}
\fi

\begin{lemma}
\label{termsubtr}
Let $P=(\row Pn)$ be a profile over a set~$A=\{\row am\}$ of alternatives, where $P$ is single-crossing with respect to a minimal tree $T$. Let $w_\text{opt}$ is an optimal $k$-assignment for $P$.  Let voter 1 be an arbitrary vertex of $T$, and let $b\in A$ be the least preferred alternative of voter 1 in $w(N)$. Then the vertices of $w^{-1}(b)$ are vertices of a terminal subtree of $T$.
\end{lemma}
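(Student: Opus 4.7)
The plan is to combine a standard normalization of optimal assignments with a structural property of terminal subtrees rooted at a chosen vertex. The first step is to observe that without loss of generality every voter $v$ is assigned under $w_{\text{opt}}$ to $v$'s favorite candidate inside the committee $w_{\text{opt}}(N)$: reassigning any voter to their top-ranked element of $w_{\text{opt}}(N)$ can only weakly decrease $r(v,w(v))$ by the positional monotonicity imposed in Definition~\ref{def:mf}, which weakly decreases both the utilitarian total and the egalitarian maximum. So the reassigned map is still an optimal $k$-assignment with the same committee, and I may assume $w_{\text{opt}}$ itself has this form.

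Under this normalization $v\in w_{\text{opt}}^{-1}(b)$ iff $b$ is $v$'s favorite in $w_{\text{opt}}(N)$, equivalently $v\in V_{bc}$ for every $c\in w_{\text{opt}}(N)\setminus\{b\}$. Hence
$$
w_{\text{opt}}^{-1}(b) \;=\; \bigcap_{c\in w_{\text{opt}}(N)\setminus\{b\}} V_{bc}.
$$
By single-crossing of $P$ on $T$, each $V_{bc}$ is a terminal subtree of $T$ (or all of $V$ when the cut is virtual), and because $b$ is the worst element of $w_{\text{opt}}(N)$ for voter $1$, voter $1$ strictly prefers every such $c$ to $b$, so $1\notin V_{bc}$ for any $c$ in the intersection.

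The final step is to root $T$ at voter $1$. A terminal subtree of $T$ that avoids the root is precisely the \emph{hanging subtree} at some non-root vertex $u_c$, namely the set consisting of $u_c$ and all of its descendants, obtained by deleting the edge from $u_c$ to its parent. Any two such hanging subtrees are either nested (when one of their anchor vertices is an ancestor of the other) or disjoint (otherwise). Since every $V_{bc}$ contains the nonempty set $w_{\text{opt}}^{-1}(b)$, the family intersects pairwise, is therefore totally ordered by inclusion, and its intersection coincides with the smallest member of the chain, which is itself a terminal subtree. The key obstacle the proof has to sidestep is that intersections of terminal subtrees are not terminal in general; the rooting trick with voter $1$ is what forces the relevant family into a chain, while the favorite-representative normalization is what lets $w^{-1}(b)$ be written as the clean intersection used in the argument.
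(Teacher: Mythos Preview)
Your proof is correct and takes a genuinely different route from the paper's argument, so a short comparison is warranted.

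Both proofs begin with the same normalization---each voter is assigned her favourite committee member---but you make this explicit while the paper uses it tacitly (``otherwise $v'$ would be assigned $a$''). From there the two arguments diverge. The paper locates the closest vertex $v$ to voter~$1$ with $w(v)=b$, takes the edge $(u,v)$ on the path toward~$1$, sets $a=w(u)$, and shows directly that $w^{-1}(b)=V_{ba}$ by using that linear orders along any path form a classical single-crossing subprofile. So the paper pinpoints the specific terminal subtree and a single alternative $a$ witnessing it. Your argument instead expresses $w^{-1}(b)$ as $\bigcap_{c\in w(N)\setminus\{b\}} V_{bc}$, roots the tree at voter~$1$, observes that each $V_{bc}$ is a hanging subtree avoiding the root, and then invokes the laminar (nested-or-disjoint) structure of hanging subtrees to conclude that the family is a chain whose intersection is its smallest member.

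What each approach buys: the paper's proof is shorter and yields the explicit description $w^{-1}(b)=V_{ba}$ for one alternative $a$, which dovetails immediately with the dynamic-programming recursion that follows. Your proof is more structural and avoids any path argument; it would port unchanged to any setting where the sets $V_{bc}$ form a laminar family once a root is fixed. A small remark: your ``pairwise intersecting'' step uses $w^{-1}(b)\neq\emptyset$; when the normalized preimage is empty the conclusion is trivial, and the paper's proof carries the same implicit assumption when it picks a closest vertex in $w^{-1}(b)$.
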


\begin{proof}
On a tree $T$ we may define the distance between any two vertices $u$ and $v$ which is the number of edges on the unique path connecting these two vertices. 
Let $v$ be the closest vertex to $1$ such that $w(v)=b$. Let $u$ be the vertex on that path which is one edge closer to $1$ (it can be actually $1$ itself). Due to minimality of $T$ the edge $(u,v)$ is an $(a,b)$-cut for some $a= w(u)$ where $a\succ_1 b$. Let us show that $w^{-1}(b)=V_{ba}$. Suppose first that $w(v')=b$. Then $b\succ_{v'} a$ (otherwise $v'$ would be assigned $a$) and hence $v'\in V_{ba}$. Suppose now $v'\in V_{ba}$. Then $v'$ is connected by a path within $V_{ba}$, hence the unique path between 1 and $v'$ passes through $v$. Since linear orders on this path %by Proposition~\ref{pathsubprofile} 
form a classical single crossing subprofile we have $w(v')=b$.
\end{proof}

The fact that the least preferred alternative of voter 1 in the elected committee $w(N)$ represents voters in a terminal subtree of $T$ is important in our design of a dynamic programming algorithm. In this algorithm we will fix an arbitrary leaf, without loss of generality it will be voter 1, and reduce the problem of calculating an optimal assignment for a profile $P=(\row Pn)$ over a set of alternatives $A=\{\row am\}$ to a partially ordered subproblems which will be defined shortly. Let us denote the original problem as $(P,A,k)$. We define the set of subproblems as follows. A pair $(P',A',k')$, where $k'\le k$ is a positive integer, $P'$ is a subprofile of $P$ and $A'$ is a subset of $A$ is a subproblem of $(P,A,k)$ if 
\begin{enumerate}
\item $P'=P_{ab}$, the subprofile of linear orders corresponding to vertices of a subset of vertices $V_{ab}\subseteq N$, where $a\succ_1 b$;
\item $A'=\{\row a{j}\}$ for some positive integer $j$ such that $k\le j\le n$;
\item the goal is to find an optimal assignment $w'\colon V_{ab}\to A$ with $|w'(V_{ab})|\le k'$.
\end{enumerate}

The idea behind this definition is based on Lemma~\ref{termsubtr}.   Namely, we can try to guess the least preferred alternative of voter 1 in the elected committee $a_j$ and the terminal subtree $V_{ba}$ whose voters are all assigned to $a_j$, then the problem will be reduced to choosing a committee of size $k-1$ among $\{\row a{j-1}\}$ given the profile $P_{ab}$ of voters corresponding to $V_{ab}$. We note that there are at most $n-1$ subproblems. Note that our cuts can be naturally ordered: we can say that $ \text{$ab$-cut}\subset \text{$cd$-cut}$ if and only if $V_{ab}\subset V_{cd}$.\par\smallskip

 Following \cite{SYFE2013} we can prove the main theorem of this section. \par\smallskip

%Now we can prove the main theorem of this section. In doing this we follow the proof in \cite{SYFE2013} with necessary modifications.\par\smallskip

\noindent{\it Proof of Theorem~\ref{CCtheorem}.}
The tree $T$ with respect to which $P$ is single-crossing can be computed in polynomial time, by Theorem~\ref{conmintree}. We can identify one of the leaves then. Let this be voter~1 with preferences  $a_1\succ_1...\succ_1a_m$.

For every $ab$-cut, $j\in \{1,\ldots,m\}$ and $t\in \{1,\ldots,k\}$ we define $A[V_{ab},j,t]$ to be the optimal  dissatisfaction (calculated in the utilitarian or egalitarian way) that can be achieved with a $t$-assignment function  when considering a subprofile $P'=\{P_{ab}\,|\,1\in V_{ab}\}$ over $A'=\{\row aj\}$. It is clear that for $V\subseteq N$, $j\in M\setminus \{1\}$ and $t\in K\setminus \{1\}$, the following recursive relation holds
\begin{align*}
A[V,j,t]=&\min\left\{A[V,j-1,t],\min\limits_\text{$ab$-cut}\ell(A[V_{ab},j-1,t-1],\right. \\
%&\left.(r(v_i,a_j))_{v_i\in V_{ba}})\right\}.
&(r(v_i,a_j))_{v_i\in V_{ba}})\Big\}.
\end{align*}
Here $\ell$ is used to mean either the sum or the maximum of a given list of values  depending on the utilitarian or egalitarian model, respectively. To account for the possibility that $a_j$ is not elected in the optimal solution, we also include the term $A[V, j - 1, t]$. Thus,
$A[N,m,k]$ is then the optimal dissatisfaction, and in calculating it we simultaneously find the optimal $k$-assignment function.

The following base cases are sufficient for the recursion to be well-defined
\begin{itemize}
\item $A[\varnothing,j,t]=0$;
\item $A[V,j,1]=\min\limits_{j'\leq j} \ell((r(v_i,a_{j'}))_{i\in V})$;
\item $A[V,j,t]=0$ for $t\geq j$.  %(everybody is assigned to their most preferred alternative).
\end{itemize}

%These conditions suffice for our recursion to be well-defined. 
Using dynamic programming, we can compute in polynomial time, in fact, in time $O(mn^2k)$, the optimal
dissatisfaction of the voters and the assignment that achieves it. Note that the complexity is the same as for the classical case.
% in \cite{SYFE2013}.

\section{Conclusion}

This paper generalises the classical single-crossing property to a single-crossing property on trees. In this more general setting, we prove various results such as the fact that the majority relation is transitive as well as a representative voter theorem. Furthermore, for any tree, there exists a profile of preferences which is single-crossing with respect to the tree and for which the tree is minimal. Finally we present two results on algorithmic aspects of single-crossedness. The first one states that recognising single-crossingness on a tree is polynomial and the second shows that the winner determination problem for the Chamberlin-Courant rule is also polynomial for single-crossing profiles.

\bibliographystyle{elsarticle-harv}
\bibliography{Single-crossing}     %%%%%%%%%%%%%%%%%%%%     here is the file name of the bibliography

\end{document}